\theoremstyle{definition}
\newtheorem{definition}{Definition}
\newtheorem{proposition}[definition]{Proposition}
\newtheorem{definition/proposition}[definition]{Definition/Proposition}
\title{Authorship and the Politics and Ethics of LLM Watermarks}
\author{Tim Räz\footnote{University of Bern, Institute of Philosophy,  L\"anggassstrasse 49a, 3012 Bern, Switzerland. E-mail: tim.raez@posteo.de}}
\begin{document}

\maketitle

\begin{abstract}
Recently, watermarking schemes for large language models (LLMs) have been proposed to distinguish text generated by machines and by humans. The present paper explores philosophical, political, and ethical ramifications of implementing and using watermarking schemes. A definition of authorship that includes both machines (LLMs) and humans is proposed to serve as a backdrop. It is argued that private watermarks may provide private companies with sweeping rights to determine authorship, which is incompatible with traditional standards of authorship determination. Then, possible ramifications of the so-called entropy dependence of watermarking mechanisms are explored. It is argued that entropy may vary for different, socially salient groups. This could lead to group dependent rates at which machine generated text is detected. Specifically, groups more interested in low entropy text may face the challenge that it is harder to detect machine generated text that is of interest to them.
\end{abstract}

\tableofcontents

\section{Introduction}
\label{sec:intro}

Large language models (LLMs) such as ChatGPT can generate high-quality texts, which creates opportunities, but also the possibility of abuse, including automated disinformation campaigns, spamming, plagiarism and academic cheating \citep{weidi2021}. Thus, it would be desirable to have a reliable method to distinguish text generated by machines and by humans  \citep{kirch2023,chris2023,aaron2023}. Such a method could also help to prevent information pollution, adverse consequences of machine generated text of any kind entering into the environment, such as machine generated text being used as training data for LLMs, which can degrade performance \citep{radfo2023}. Recently, so-called watermarking schemes \citep{kirch2023,chris2023} have been proposed to address these issues. Watermarking schemes inject a secret statistical signal into text generated by an LLM. This signal can be used to detect whether a text was generated by that LLM or not.

The goal of the present paper is to explore philosophical, political, and ethical ramifications of implementing and using watermarking schemes. To provide a conceptual backdrop, a definition of authorship that includes both machines (LLMs) and humans is proposed (Sec. \ref{sec:authorship}). The paper also provides an informal introduction to two watermarking schemes (Sec. \ref{sec:schemes}). On this basis, it is investigated how watermarking affects the usual determination of authorship (Sec. \ref{sec:author-certification}). It is argued that watermarking may provide private companies with sweeping rights of authorship determination, which is incompatible with traditional standards of authorship determination. A partial, institutional solution for this problem is proposed. The paper then discusses the possible ramifications of a common feature of watermarking mechanisms, viz. their entropy dependence (Sec. \ref{sec:entropy-groups}). In a nutshell, the reliability with which text can be detected as watermarked depends on there being an ``opportunity'' for injecting the watermark. This opportunity is measured by entropy. It is argued that entropy may vary for different, socially salient groups. If so, this could lead to different rates at which machine generated text is detected for the different groups. Groups more interested in low entropy text may face the challenge that it is harder to detect machine generated text that is of interest to them. The paper does not focus on the general question whether watermarks are beneficial, all things considered; this issue is discussed, e.g., by \citet{grinb2022}. Rather, the focus is on the (possibly adverse) consequences \emph{if} watermarking schemes currently available were put to use.

\section{Authorship}
\label{sec:authorship}

\subsection{The Definition of Authorship}

The following, preliminary definition of authorship provides a useful starting point.\footnote{The definitions discussed in this section are very similar to a ``narrow'' reading of authorship proposed by Foucault: ``I have discussed the author only in the limited sense of a person to whom the production of a text, a book, or a work can be legitimately attributed'' \citep[p. 216]{fouca1998}.} 

\begin{definition}
The author of a text is the person who wrote the text down.
\end{definition}

This definition is reasonably clear and operational, because it associates the author with the writing process.\footnote{It is also similar to definition I.1.a. of ``author'' in the Oxford English Dictionary, online version: ``The writer of a book or other work''.} However, it is easy to come up with counterexamples. First, the author cannot be a \emph{person}, because we want to discuss authorship of LLMs, and LLMs do not have personhood. The definition has to be weaker to encompass LLMs. Also, requiring that the author is \emph{one} person does not do justice to texts with multiple authors. Second, it is not necessary nor sufficient to require that an author \emph{wrote the text down}. It is not necessary because text can be produced in other ways: Dostojevsky dictated some of his novels, and we would not want to contest authorship because he did not write them down himself. It is not sufficient because the reproduction of text does not constitute authorship: You do not become the author of the bible by copying it. What seems necessary is that the text is created by the author in a suitable manner. However, the idea that the author creates the text is also problematic if it is not qualified. Any author is embedded in a cultural and linguistic context and is expected and even forced to draw on that context, cf. \citep{coeck2023}. What is more, the extent to which an author is expected to be the creator of a text depends on the kind of text. An author of a sonnet is not expected to created the form of the poem, but to work within the form.\footnote{\citet[p. 221]{fouca1998} questioned the idea of the author as a ``genial creator of a work'' and ``indefinite source of significations'', and pointed out the dependence of authorship (``authorship function'') on the kind of text (``discourse'') in question.} With these qualifications in mind, here is a refined version:

\begin{definition}
The author of a text is the entity who created relevant aspects of the text, where relevance is determined by the kind of text and the context.
\end{definition}

This definition accommodates some problems with the previous attempt. First, \emph{entity} does not require personhood, and it is compatible with the entity being a collective or set of people. Second, \emph{created} leaves open the specific process by which the text was generated. Third, \emph{relevant aspects of the text} accommodates the fact that not every aspect of a text is created by the author. It is also compatible with different people of a collective making different contributions, e.g., one author of a scientific paper providing guidance and ideas, and a second author formulating the text. This definition could still be considered problematic. \emph{Creation} may appear to refer to an exclusively human ability, requiring mental properties or a creative act. However, both mental properties and (intentional) actions are presumably absent in LLMs \citep{vanwo2024}. A weak sense of creation is needed, along the lines of the author producing relevant aspects of the text.\footnote{This is in line with a proposal by \citet{coeck2023}, according to which we do not need an authoritative origin, a creative author, for a text to be meaningful or significant.} But if we only require production instead of creation, a sense of novelty implicit in creation is lost. We thus have to require explicitly that the relevant aspects of the text are not merely reproduced. This yields the following definition.

\begin{definition}
The author of a text is the entity who produced relevant aspects of the text, where (i) production is not mere reproduction of the relevant aspects, and (ii) relevance is determined by the kind of text and the context.
\end{definition}

This definition of authorship is not general, it does not intend to capture all aspects of authorship, and is not supposed to provide necessary and sufficient conditions. First, it only captures the authorship of text, not of other modalities such as pictures. Second, it is supposed to encompass both humans and LLMs, which means that it is a \emph{weak} notion of authorship, avoiding a more narrow, anthropomorphic notion of intentional, creative authorship. It could be worried that the definition is too weak. For example, a random text generator qualifies, in principle, for authorship according to this definition. However, this is not problematic because the definition is only supposed to capture text that may have been produced by a human. For present purposes, authorship of LLMs is of interest because LLMs are able to produce text of relatively high quality. This excludes strings of symbols that have been produced by a random process, gibberish, and extremely incoherent text. The latter are simply not the cases we worry about. However, it is plausible that a certain degree of randomness can in fact play a role in producing high-quality text -- more on this below. Third, the definition is weak in that it does not presuppose agency, intentionality, or any kind of mental ability for authors. It is therefore also unsuitable as a definition of authorship that entails a legal or moral status for authors. \citet{porsd2023} note that many scientific journals have more stringent authorship criteria, including the ability of an author to approve of a text before publication, and to be accountable for their work, and these criteria exclude LLMs from authorship. 
\citet{vanwo2024} defend a stronger definition of authorship that entails, e.g., blameworthiness of authors for a text. These aspects are not part of the above definition. Note that this does not preclude the \emph{provider} of an LLM to be morally responsible for model output. Then, the relevance relation does a lot of work in this definition. There is no way around some degree of contextuality. In fact, contextuality is desirable because it opens the possibility of an influence of social norms on authorship, because social norms help determine what we expect of different kinds of text, including their ``originality''.

\subsection{The Determination of Authorship}
\label{sec:author-determin}

To determine authorship, we have to check which entity satisfies the conditions in the above definition of authorship. We can use internal or external evidence to determine whether an entity produced relevant aspects of a text without mere reproduction. Internal evidence is evidence contained in the text, while external evidence accompanies the text, e.g., metadata or physical properties of a hard copy.

Authorship can be proved with internal evidence if an entity has unique properties necessary to produce the relevant aspects of a text. For example, if a text contains information that only one entity has, then this entity is the author of the text; reproduction is excluded because no other entity has the information and thus the ability to produce the text. We will see below that watermarks for LLMs provide information of this kind: Watermarked text has a unique profile encoded by a key, and if this key is kept secret by the LLM provider, then the profile of the text could not have been produced by anyone else but the provider of the LLM, except with small probability.

Usually, determination of authorship is more circumstantial, in particular for human authors, because the internal evidence usually does not point to a unique entity. The default way to determine authorship is to accept the signatory of a text as the author. This default is usually not questioned, unless there is conflicting evidence. Conflicting evidence can take several forms. First, the properties of the signing author may not match the internal and external textual evidence. For example, assume I would publish one of Alice Munro's short stories situated in rural Canada under my name. If you knew me, you would justifiably question my authorship of that text, because I lack the knowledge necessary to write about this subject matter (I have never been to Canada), and in that style (I lack the necessary literary sophistication). In short, my properties do not match those of the author of the short story.
 
Second, two or more different entities may claim authorship with respect to the same aspects of a text, which constitutes an authorship conflict. In this situation, one method is to examine, as above, whether the properties of one of the entities match the internal evidence better than the other. Another method to resolve authorship conflicts is to examine external evidence. For example, in modern-day scientific priority disputes, we can look at the timestamps of papers on preprint servers. If the timestamp of one version is earlier than the other, and if there is no evidence that the text was created earlier by the other entity, then we ascribe authorship to the entity of the version with the earlier timestamp. Similarly, hard copies can be dated by examining their physical age and matched with authors in this way, and so on.

It could be asked whether it is possible that one and the same text was produced independently by two different entities. In some cases, this is possible. For example, there are examples where the same scientific discovery has been made by several scientists independently.\footnote{The discovery of supersymmetry in particle physics by three groups in the 1970s is an example \citep{rodri2010}.} However, independent production only happens on a coarse-grained level, that is, on the level of an idea expressed in a text, or with respect to a short piece of text, e.g., an equation. To see why, recall that it is very improbable that two entities produce a long random string of symbols independently of each other. The probability that two entities flip a fair coin 1000 times and come up with the same sequence is negligible. Of course, a text is not a random string. Still, the definition of authorship requires that the author produces the relevant aspects of a text independently of prior productions, and it is unlikely that two entities produce the same text independently of each other, given that there is sufficient opportunity to produce these relevant aspects. This informal idea is exploited by the watermarking schemes to be discussed below.

\subsection{The Authority of Authorship}

The fact that authorship and authority are related is apparent from the word stem of the word ``author''.\footnote{Definition II.5. of ``author'' in the Oxford English Dictionary, online version, reads: ``The person on whose authority a statement is made; an authority, an informant''.} In some cases, the authority of authorship has its origin outside of a text. If you already have power, then what you say has weight. A clear example of this is lawmaking. By signing the Executive Order on AI, President Biden put it into law. In this case, it is unlikely that the signatory of the law is also its author, except in a very general sense -- Biden may have ordered the EO to be written, indicating the general direction or spirit. In other cases, and for some kinds of text, the power of an author stems from their ability to produce that kind of text. We admire and pay novelists and scientists for their creative abilities, which is evident in their texts. In this case, authority arises at least in part from properties of the text itself.

Whenever there is an opportunity to gain power, some people will try to obtain it without putting in the work. An instructive example is ghostwriting, where the ghostwriter produces text for someone else without taking authorship credit, that is, they waive their right to be identified as the author of a text. The signatory, who takes credit, may pay the ghostwriter for their service, and also for their silence with respect to the actual authorship. Such a voluntary ``transfer'' of authorship can be tricky. If it is discovered that a signatory used a ghostwriter without disclosing it, they may lose some of the power they gained by claiming authorship. Depending on the kind of text, there is an expectation that authorship is faithfully disclosed. Of course, such norms can erode. For example, it has to be expected that celebrity memoirs are written with substantive help by editors or ghostwriters, even if this is not disclosed.

\section{Watermarking Schemes}
\label{sec:schemes}

\subsection{Preliminaries}

Watermarking schemes have two main parts, a watermark injection mechanism and a watermark detection mechanism. The goal of the injection mechanism is to inject a secret signal into text generated by an LLM, preferably such that the quality of the text does not suffer. The signal is secret in that it is only known to the LLM provider. The goal of the detection mechanism is to determine, given any text, whether or not the secret signal has been embedded. Given a text and the secret signal, the detection mechanism computes a score, which tells us how likely it is that the signal was embedded. If detection works well, the score allows a good separation between watermarked and non-watermarked text.

To be a bit more precise, an LLM computes a function $\mathcal{M}$ that takes a prompt (text) as an input and outputs a text in response.\footnote{The presentation in this section draws on \citet{kirch2023,chris2023}.} Syntactically, text is a sequence of tokens from a token set $\mathcal{X}$, the vocabulary, which may contain 50'000 tokens. The model starts by taking $\mathsf{prompt}$ as an input and computes the probability of the first token, $\mathcal{M}(\mathsf{prompt}) = p_0$, a distribution over the token set $\mathcal{X}$. The first token $x_0$ is then sampled from this distribution. Then the model repeats this process, using the prompt and the first token as an input. At time step $t$, the model computes $\mathcal{M}(\mathsf{prompt}, x_0, ..., x_{t-1}) = p_t$ and then samples $x_t$ from $p_t$. The process stops when a terminal token is drawn.

The injection mechanism modifies the sequence of token distributions $p_t$, injecting a pattern such that it can later be detected as watermarked, but such that the injection does not degrade the quality of the generated text too much or at all. The detection mechanism requires a text, possibly generated by an LLM, as input. The goal is to determine whether or not the text is watermarked. No prompt, probability distribution, or LLM access is necessary, because the prompt may be unknown, and because a text may not have been generated by an LLM. The detection mechanism can be private or public. If it is private, you do not know the signal, but you may query a detection API, which tells you whether or not a text is watermarked, without revealing the detection mechanism. If the detection mechanism is made public, this could be used to remove watermarks from text, and possibly to inject the watermark into text generated by a different LLM.

The degree to which text can be watermarked is limited. Informally, it is impossible to inject a watermark if certain aspects of a text, or even the entirety of a text, are predetermined. For example, if an LLM quotes a text, there is simply no opportunity to modify it to inject a watermark. Watermarking is only possible if there is an opportunity to do so. Formally, the degree to which we can inject a watermark depends the entropy of the distributions $p_t$ of text generation. (Shannon) entropy is a property of probability distributions and measures how much information we gain by sampling from that distribution; see the appendix for the definition. If an LLM reproduces a text verbatim, the probability distributions generating that text have their probability masses concentrated on the ``correct'' tokens that yield that text. For example, if the LLM quotes the beginning of Genesis I, it first outputs ``In the ...''. The distribution generating the next token has its probability mass concentrated on the word ``beginning''. In other words, the prediction is essentially deterministic. Such probability distributions have low or zero entropy. The more ``spread out'' the token probabilities, the higher the entropy. Entropy is maximal if a distribution is uniformly random. Thus, entropy measures the degree of randomness in text generation, which creates the opportunity to inject a watermark. Next we will see two schemes that put these informal ideas to work.\footnote{Note that while all works discussed below are technical, the scheme by \citet{kirch2023} is the most accessible. It is also recommended to look up the slides by \citet{aaron2023}, they provide a nice introduction to the main issues of watermarking schemes. The author first learned about watermarks from Aaronson's slides and his blog.}

\subsection{Kirchenbauer et al. 2023}

\citet{kirch2023} propose a watermarking scheme with the following properties: 1. Watermark injection can be applied to a trained model, no retraining is necessary. 2. Watermark detection does not presuppose access to the LLM or output probabilities. 3. Watermark detection works under certain assumptions if watermarked text is embedded in larger text. 4. The watermark can only be removed by changing a certain fraction of watermarked text. 5. Statistical tests for the confidence of watermark detection are formulated.

The authors first provide a simplified version of their scheme. The injection mechanism works as follows. Assume that the token $x_{t-1}$ has been predicted. For the prediction of token $x_t$, the vocabulary $\mathcal{X}$ is partitioned into equal-sized lists of tokens dubbed red and green. This partition is generated by a pseudorandom function $f$ that depends on $x_{t-1}$.\footnote{A pseudorandom function (PRF) is a (deterministic) mathematical function that ``looks like'' a random function. Technically, a PRF cannot be efficiently distinguished from a random function, cf. \citep{chris2023}. The PRF can be made to depend on a subset of previous tokens, which makes the scheme more robust.} The function $f$ always provides the same partition for the same input (token). Now the sampling from $p_t$ is modified: only green-list tokens are sampled, red-list tokens are excluded. The detection mechanism works as follows. Given a text we want to check for the watermark, we apply the function $f$ sequentially to each token, and determine, for every next token, whether that token is on the green or the red list. If a text is not watermarked, we expect green-list and red-list tokens to appear with the same probability. If a text, or a part of a text, is watermarked, we expect green-list tokens to appear more often than red-list tokens. For example, a text that is completely watermarked only contains green-list tokens. The probability that a non-watermarked text of length $T$ does not contain red-list tokens is $1/2^T$, which is very small even for texts of moderate length. This simplified scheme with a hard red list works, but it may degrade text quality in cases of low entropy. Kirchenbauer et al. illustrate the problem with the example of text containing the token ``Barack''. If, for the next token prediction, the token ``Obama'' is red-listed, which it is with probability $1/2$, then the text can never contain the string ``Barack Obama''. The token ``Obama'' has a high probability after ``Barack'', which means that the entropy of the prediction is low. Excluding the sequence ``Barack Obama'' from appearing in a text that contains ``Barack'' is undesirable.

To overcome this problem, Kirchenbauer et al. propose a soft version of the hard rule of red-list and green-list tokens. The degree to which tokens are red-listed depends on the so-called spike entropy of the generated text. Spike entropy has properties similar to Shannon entropy: It is low if the probability mass is concentrated on one or few outcome, and high if the probability has uniform distribution; see the appendix for the definition. The strength of watermark injection depends on the spike entropy of the distributions $p_t$.\footnote{The softness of the soft rule is controlled by a parameter delta, which is used to compute a modified version of the token probabilities. A second parameter, gamma, controls the fraction of tokens on the green list; above it was set to $0.5$.} If the spike entropy of $p_t$ is low, such that the probability mass is concentrated on one high-probability token (as in the case of ``Obama'' after ``Barack''), the soft rule does not have much of an effect on the probabilities. If spike entropy is higher and the probability mass is more evenly distributed, one fraction of tokens with similar probabilities get their probability boosted (green-listed), while the other fraction does not get a boost (red-listed). If spike entropy is sufficiently high for sufficiently many tokens, this creates a statistically discernible green-list bias, similar to the hard red list, which can be detected by applying the function $f$.

Kirchenbauer et al. then analyze the statistics of green-list tokens in watermarked and non-watermarked text. The idea is that watermarked text creates a distribution with a higher fraction of green-list items, given a minimal average spike entropy. Specifically, Kirchenbauer et al. derive a lower bound on the expected number of green list items and an upper bound on the variance of this number for watermarked text. The mean and variance of these numbers for text without watermark are easy to compute. Using standard approximations, they then analyze how much the statistics for watermarked and non-watermarked text overlap. Specifically, they compute how many false negatives (watermarked text that is not detected) we are likely to get for a given level of false positives (non-watermarked text detected as watermarked). Kirchenbauer et al. find that for a text of 200 tokens, average spike entropy of $0.8$, a green-list probability of $\gamma = 0.5$, and a false positive rate of the order of $10^{-5}$ (four standard deviations), the false negative rate is guaranteed to be below $1,4\%$.

An important question is whether the watermark injection affects the quality of generated text. Kirchenbauer et al. measure whether watermarked text deviates from the pre-watermarked version with a quantity called perplexity, which is closely related to entropy. They find that high-entropy and low-entropy output is not much changed by the watermark. However, there is a possible quality degradation for text with moderate entropy. 

The watermarking scheme can be used in public or in private mode. It is public if the key, the function that generates the partition, is public. If the key is public, everyone can reconstruct the green and red lists for each token. This can be used to remove the watermark. The key can also be kept private, and one can only query the key through an API. In this case, the number of queries has to be limited, because otherwise, the key can be reconstructed by tabulating answers to queries.

Kirchenbauer et al. also discuss possible attacks against this watermarking scheme, ways in which watermark can be removed and detection avoided. The strongest attack appears to be a so-called generative attack, in which an LLM is instructed to generate text that can be changed easily and such that the change removes some of the watermark. One example is a so-called emoji attack, in which the LLM is instructed to insert an emoji between every word. Such attacks are strong because they can possibly be automated easily -- just delete the emojis to obtain non-watermarked text. However, these attacks also presuppose a strong LLM that can follow the instruction, and they may increase the cost of text generation.

\subsection{Christ et al. 2023}
\label{sec:christ}

The watermarking scheme by \citet{chris2023} produces output that provably cannot be distinguished efficiently from non-watermarked output. This is the main difference to Kirchenbauer et al., who use a scheme that changes the output noticeably. The watermarking scheme by Christ et al. has three crucial properties. 1. \emph{Undetectability}: It is not possible to efficiently detect watermarked text without access to a secret key, even with query access to the model. 2. \emph{Completeness}: Watermarked text can be detected efficiently using the secret key, with negligible probability of not detecting watermarked text (false negatives), under an assumption about the entropy of model output. Detection also works on substrings. 3: \emph{Soundness}: The probability that non-watermarked text is misclassified as watermarked (false positives) is negligible.

As in Kirchenbauer et al., only text with a certain entropy can be watermarked. In fact, Christ et al. prove that the assumption about entropy is necessary: if one wants an undetectable watermarking scheme as defined by them, the model output can only be watermarked if it has a certain minimal entropy. Christ et al. extend their results to parts (substrings) of a text. At the core of this scheme is the cryptographic notion of efficient indistinguishability. Informally, two functions are efficiently indistinguishable if the probability that an efficient (polynomial time) algorithm can distinguish the two functions is very small.

Here is an informal description of key aspects of this scheme.\footnote{The focus here is on algorithms 3 (injection mechanism) and 4 (detection mechanism) in \citet{chris2023}.} First, Christ et al. show that it is sufficient to consider a binary alphabet. Then, they assume temporarily that one has access to true randomness. For the token $x_t$ at time $t$, start with the probability $p_t(1)$ that $x_t$ will be set to $1$ (recall that this is the binary case). Compare this probability to a uniformly random draw from $[0,1]$. If this draw is smaller than $p_t(1)$, let the output $x_t$ be 1, otherwise let $x_t$ be $0$. This method does not change the distribution $p_t$ for one response, because the draw is random. While this yields undetectability for a single $p_t$, if the method is applied to a sequence of token distributions, it yields a detectable signal. The detection mechanism uses a score that compares the tokens of a given text and the string of random draws of the same length. The score basically computes the Shannon information content (called empirical entropy by Christ et al.) of the random draw conditioned on the text; see the appendix. Now, if one computes the expectation of that score for watermarked text and non-watermarked text respectively, the expectation of non-watermarked text is $0$, while the expectation of watermarked text is essentially the Shannon entropy of the model output. In other words, if text is not watermarked, the random sequence is not embedded in the text; if it is watermarked, the random sequence can be detected if the entropy of text generation is sufficiently high.

This scheme needs some modifications to make it practical. As just described, it uses true randomness, and it can be only used once; if it is used repeatedly, its bias towards certain tokens starts to show. The issue with true randomness is solved by using a pseudorandom function (PRF); the use of the PRF is what yields the effective indistinguishability. To make the PRF repeatedly usable without revealing the statistical pattern it encodes, it uses the initial segment of tokens $(x_1, ..., x_{t-1})$ for the prediction of $x_t$. If enough entropy is used to generate this initial segment, it is essentially unique, because it is unlikely that other texts share this initial segment. The watermark injection mechanism is only activated once the entropy requirement for the initial segment is satisfied. To detect whether a text $(x_1, ..., x_n)$ is watermarked, the PRF is sequentially applied to initial segments $(x_1, ..., x_{t-1})$ for $t=2, ..., n$, and the pseudorandom signal $(u_t, ..., u_n)$ is computed for the remaining tokens $(x_t, ..., x_n)$. Then the detection score is computed from these two sequences to determine whether the signal $(u_t, ..., u_n)$ is embedded in the text $(x_t, ..., x_n)$. If this score is above a certain threshold for any of the $n-1$ initial segments, a watermark is detected, otherwise, not.

Christ et al. discuss two kinds of attacks on this scheme. Empirical attacks include post-processing of watermarked text, or prompts to generate text from which watermarks can be easily removed, including the emoji attack discussed by Kirchenbauer et al. . The second kind of attack, which could be called prefix specification attack, prompts the model one token at a time and thus prevents the formation of a watermark. Christ et al. prove that a sequence generated using prefix specification cannot be distinguished efficiently from non-watermarked model output. Thus, non-detection is guaranteed for this attack. This makes sense, because a watermark is only injected once sufficient entropy has accumulated in the initial segment. By restarting the prediction for every token, no entropy can accumulate, which prevents watermark injection. It is unclear to what extent such an attack can be prevented. Christ et al. point out that prefix specification is prohibited for ChatGPT, but is possible in the OpenAI Playground. They also note that the prefix specification attack may be prohibitive in terms of cost. The cost of generating a text of 8000 tokens with a single prompt costs less than \$1, while generating a text of the same length using the prefix specification attack costs about \$1000, because it requires a new prompt for every token.

\subsection{Other Relevant Works}

\citet{kudit2023} propose a watermarking scheme related to \citet{kirch2023}. This scheme preserves text quality, similar to \citet{chris2023}, and it is robust, which means, informally, that it is as hard to edit text to remove the watermark as it is to write the text without access to the LLM in question. \citet{kudit2023} do not explicitly rely on entropy for their statistical tests, but note that the so-called watermark potential, which measures how reliably one can distinguish watermarked from non-watermarked text, is aligned with entropy. \citet{aaron2023} proposes a watermarking scheme similar in spirit to \citet{chris2023}. This scheme also makes assumptions about the average entropy per token to provide guarantees, and produces output that is indistinguishable from non-watermarked output without access to a PRF. Aaronson developed this work at OpenAI, and details of the scheme are not public. Note that OpenAI has implemented this scheme, which means that it could be deployed, e.g., for ChatGPT, but is not deployed, as far as we know.

\citet{zhang2023} establish a theoretical limit of watermarking schemes. They show that so-called strong watermarking schemes are impossible. A strong watermarking scheme injects a watermark that cannot be removed efficiently without also degrading the quality of the watermarked text. Zhang et al. make two assumptions: First, it is possible to determine the quality of a model output for a given input with a quality oracle. Second, it is possible to perturb given outputs to obtain modified outputs with a perturbation oracle, where the perturbation preserves the quality of the output with a certain probability (along with other technical conditions). One crucial feature of this impossibility result is that removing the watermark means finding an alternative, high-quality response to a given prompt, which does not need to be semantically close to the original (watermarked) response. Zhang et al. show empirically that their result can be used to break the schemes by \citet{kirch2023,kudit2023}.

\section{Authorship Certification}
\label{sec:author-certification}

Watermarks add a new way to determine authorship, and it is important to consider how this meshes with existing ways to determine authorship. An authorship certification issued by an LLM provider states whether or not a certain text contains a watermark. In the positive case, such a statement can be interpreted as the claim that the text in question was authored by the LLM. Authorship certification can be based on watermark detection mechanisms, and detection can be handled in different ways. One important decision is whether the watermark detection mechanism, the key, is private or public. It will now be explored how these options affect authorship determination.

\subsection{Private Key}

A private key means that only the LLM provider has access to the watermark detection mechanism. The provider may allow for watermark detection via an API. The API may provide a yes/no answer, a confidence score, or highlight parts of a text containing the watermark.\footnote{Note that API access needs to be monitored, because otherwise, watermarking can be tampered with, either to steal the key, cf. \citep{kirch2023}, or by using a prefix specification attack.} Importantly, the above watermarking schemes rely on private keys for their statistical guarantees. A watermark injects a statistically unique property into a text, and its owner can use it to statistically verify whether that property was used to generate the text. If the key is private, the LLM provider thus obtains a private proof of authorship, because no one else could have known the unique property required to generate that text. If the key is kept private, then the LLM provider certifies authorship without providing public evidence for the certification. The authorship certification may be the result of a statistical test, but no proof or information that makes this test verifiable is provided \emph{to the public}. It is as if the LLM provider were saying: ``We will tell you whether or not our LLM wrote this, but we will not provide any evidence for our answer.'' Providing private companies with such sweeping rights of authorship certification seems unacceptable, because it provides these companies with the power to produce intentional false positives (certify LLM authorship of texts not produced by the LLM) and intentional false negatives (not certify LLM authorship of texts produce by the LLM). Below, both of these options are discussed.

The power of private authorship certification has limits. LLM providers cannot determine authorship freely, because the determination is a question of all available evidence, cf. Sec. \ref{sec:author-determin}. For example, if an LLM provider certifies authorship of a certain text, and someone else provides evidence that the text in question was written before the LLM began operating, this speaks against the authorship certification by the LLM provider. This also illustrates why private authorship certification is problematic if it is accepted as a brute fact: if there is no evidence that a text was \emph{not} generated by an LLM, a private key owner essentially has the authority to decide about the authorship of that text without providing any evidence.

Thus, if private authorship certification were accepted, it would have to be based on trust. Should LLM providers be trusted? The short answer is no. Generally speaking, the interests of private tech companies and the public are not aligned\footnote{Cases where a conflict of interest due to private authorship certification can arise are discussed by \citet{fairo2023}.}, and there is ample historical evidence of abuse of trust by tech companies, as witnessed, e.g., by the Cambridge Analytica scandal \citep{isaak2018,frenk2021}. If an LLM provider were to systematically abuse authorship certification, that is, provide intentional false positive and false negative authorship certifications, this abuse may eventually come to light, the LLM provider may lose public trust, and this may affect the provider's business. In the next two sections, it will be explored to what extent it could be in the business interest of LLM providers to provide false positive and false negative authorship certification if keys are private.

\subsection{Private Key, False Positive Certification}

It can be rational for an LLM provider to provide a false positive certification, that is, claim that the LLM is the author of a text not generated by the LLM. Assume that a high-level government official is hawkish about regulating LLMs, such that if the regulations suggested by the official were implemented, it would diminish the revenue of the provider significantly. Assume further that this official has written a dissertation in the past, independently of the LLM, but such that the text cannot be verified to have been authored independently of the LLM. Assume finally that a part of the dissertation is submitted to the watermark API of the provider. The provider can now certify that the submitted text was in fact authored by their model. This would undermine the official's credibility and could even lead to their resignation, undermining their ability to implement the regulation. Such a false positive certification would be rational because the expected loss due to regulation proposed by the official may be higher than the expected loss from being found out as producing a false positive certification. If the LLM provider has a good track record of authorship certification, this certification may have a high credibility.

This scenario is just one illustration of the possible damage of unchecked private key authorship certification. It is granted that the scenario is a bit contrived. However, there is precedent of government officials stumbling over plagiarism.\footnote{The former German defense minister Karl-Theodor zu Guttenberg is a prominent example \citep{suedd2011,ecker2015}.} Checks of plagiarism of academic works of government officials, journalists, and other high-profile individuals are now routine, which means that the above scenario, while unlikely, is a real possibility if API watermark detection were implemented.

\subsection{Private Key, False Negative Certification}

Now consider a scenario in which an LLM provider may use false negative certification.\footnote{\citet{kudit2023} call the fact that private keys make it necessary for detectors to trust LLM providers to actually inject the watermark the ``main inherent limitation of watermarking'' (p. 26).} One incentive for false negative certification is ghostwriting access to LLMs. Ghostwriting access means that the output of an LLM is not watermarked. Right now, all LLM access is ghostwriting access, because LLM output is not watermarked (as far as we know). However, in the future, if watermarking were either the de facto standard or even mandatory, ghostwriting access would be very valuable and could be sold at a premium. Thus, there is an incentive to sell ghostwriting access. It could be argued that ghostwriting access goes against the interest of LLM providers because it contributes to information pollution: LLM providers may want to watermark LLM output to be able to filter it out of their training data. However, LLM providers could use a second private key for text in ghostwriting mode, which is used internally, but not accessible through the watermark API.\footnote{Note that such a second, secret key of ghostwriting mode would not enable the company to detect generated text from other LLM providers. Thus, LLM providers would poison each other's well, unless they coordinate.} Note that prefix specification attacks are a kind of implicit ghostwriting mode, a guaranteed way of getting rid of watermarks. \citet{chris2023} write that the cost of these attacks may be prohibitive, cf. Sec. \ref{sec:christ}. However, depending on the scale of text generation, using a prefix specification attack is orders of magnitude less expensive than building a high quality LLM, which is a different way of circumventing watermarks. Also, the prevention of prefix specification attacks will presumably become harder as the number of LLM providers with high-quality output increases, because this increases the number of prompts that have to be rejected to prevent attacks.

\subsection{Public Key}

LLM providers could also publish their watermark detection mechanisms. In this case, whether or not a text was generated by an LLM is publicly verifiable, and it is not necessary to trust LLM providers to not provide false positive certifications. LLM providers could still provide ghostwriting access to select customers while watermarking all other text. Public keys create further problems. First, the watermarking schemes discussed above are based on private keys. Making keys public would destroy the statistical guarantees proved by \citet{chris2023}, and make it easier to break the scheme by \citet{kirch2023}. Thus, the watermarking schemes would lose some of their power. Second, third parties could try to use the detection mechanisms to reverse engineer the watermark injection mechanism. These third parties could then pollute the digital environment with text watermarked by the original LLM provider, thus hurting that provider.\footnote{This point is also discussed by \citet{fairo2023}.} Note that this presupposes that third parties have the capability to produce LLM output that can pass for output by the original LLM provider. In sum, public keys defeat, to some extent, the purpose of watermarks.

\subsection{Third Party Private Key}

A different option is to keep keys private, but to extend privacy to a third party, a watermarking clearing-house that is independent of LLM providers. This would shift some of the responsibility of authorship certification to a third party that does not have the incentives of LLM providers to provide false positive (and false negative) certifications. Third party certification could be done in different ways. One possibility is that all LLM providers devise their own watermarking mechanisms, but hand over their private keys to the third party for detection. A different possibility is that the third party creates and issues the private keys to the LLM providers. Watermark detection would be provided by the third party via API, possibly with information about which LLM provider generated a text.

This solution solves several problems. It would protect watermark functionality because keys would not be public. Then, LLM providers would no longer be in full control of authorship certification. The unique property that makes LLM output identifiable would be controlled by the third party, in particular if it is issued by the third party. As a consequence, intentional false positive certifications by LLM providers are no longer possible: Either the LLM provider injects the signal controlled by the third party into the text or not, but the signal would be detectable by the third party. A further advantage of this solution is that the third party could detect watermarks for many LLM providers. This is at least organizationally simpler than separate watermark detection for different providers. A third party private key does not prevent the possibility of ghostwriting mode, that is, LLM providers not watermarking text, even if it is mandatory. Note that there may be other, technical solutions interpolating between private and public key watermark detection. For example, \citet{fairo2023} propose a watermarking scheme with a public detection mechanism. This may lead to a technical, non-institutional solution to some of the problems discussed in this section.

\section{Entropy Differences Between Groups}
\label{sec:entropy-groups}

\subsection{The Idea}

Watermarking is entropy dependent, and entropy may vary for different groups. The watermarking schemes discussed above are entropy dependent in that the degree to which text can be watermarked, and thus also be detected as watermarked, depends on the entropy of text generation. Entropy dependence does not mean exactly the same in the two proposals, but they are similar.\footnote{In the scheme of Kirchenbauer et al., watermark injection depends on spike entropy. If the average spike entropy of text generation is higher, the false negative rate is lower for a given level of false positives. In the scheme of Christ et al., a watermark is only injected once the cumulative Shannon information content of generated text is sufficiently high. Christ et al. prove a completeness guarantee: if the cumulative Shannon information content of generated text is sufficiently high, the probability that watermarked text will not be detected as such is negligible. The \emph{expected} Shannon information content of text generation is Shannon entropy. Thus, on average, the completeness guarantee is easier to satisfy if the Shannon entropy of generation is higher.} Then, different kinds of text generated by LLMs may exhibit systematic entropy differences. Texts cover different topics, are written in different registers in one language, or in different natural languages. These categories, and evidence for entropy differences between some of them, will be explored below. Furthermore, interest in and use of different kinds of text may be correlated with group membership, including socially salient groups such as race, gender, class, age, etc. . The different kinds of text may function as proxies for group membership -- whether or not this is the case is an empirical question. Thus, different groups may encounter differences in how well generated text that interests them can be detected as watermarked, and this may result in benefits or disadvantages for these groups. Thus, watermarks may create a fairness risk of LLMs, if they are deployed \citep{weidi2021}. This argument will now be explored in more detail.

\subsection{Sources of Entropy Differences}

This section explores ways in which entropy differences between kinds of text may arise and be correlated with different groups. Before doing so, it should be stressed that entropy is formal property of text (generation) and should not be confused with a measure of text quality. If in doubt, recall that sequences of random numbers, which have maximum entropy, are not a compelling read. It is also stressed that which groups are associated with low or high entropy is an empirical question that has to be investigated in future work.

\paragraph{Topic Dependence}

Different groups may be interested in different topics, which leads to differences in text generation. Different topics, in turn, may be associated with different average entropies. If one group is more interested in generating text that is more constrained or standardized in form or content, and this group generates or uses more text of this kind, then text generation by and for this group may have lower entropy. To give an example, there is a correlation between education levels and migration.\footnote{One instance is an education gap of several years between first generation Mexican American men and non-migrant White American men in the US; cf. \citep[Ch. 10]{haas2023}. The same is true for some migrant groups in Europe. Note that these gap close quickly in the second to third generation.} Thus, in the context of education, first-generation migrants may, on average, be more interested in text generated for lower education levels than non-migrants. It is plausible that texts on lower education levels are more standardized, such that text generation for this level may have lower entropy than for higher education levels. Again, this last part is a conjecture in need of empirical confirmation. If the conjecture is true, then, in the context of education, the texts that are of interest to migrants, for generation and for reading, have lower average entropy levels, and thus, potentially, a higher false negative rate.

\paragraph{Register Dependence}

The language register, or style, is a potential source of entropy differences between groups. People use language differently, because they choose a certain tone, high or low, literary or informative, or because they have certain means of expression at their disposal, which is shaped by who they are, including membership in different groups. Do registers of natural language yield different entropies? At a first glance, this question does not makes sense, because entropy is a property of probability distributions, and we do not have such distributions for text in natural language that was not generated by an LLM. However, we can estimate such distributions for any text, and then calculate the entropies of the distribution estimates.\footnote{There is a tradition, going back to Shannon, of trying to estimate the entropy of texts and different languages. A simple method is to let humans guess the next letter of a given text. A method that appears to yield good estimates is to let humans place bets / gamble on the next letter of a given text. The optimal gamble is proportional to the distribution of the next letter, cf. \citep[Ch. 6]{cover2006}.} For example, \citet{konto1997} estimates the entropy of three text corpora: The King James Bible, two novels by Jane Austen, and two novels by James Joyce. Kontoyiannis found the following estimates of entropy (bits-per-character) for these authors: 0.92 for the Bible, 1.78 for Jane Austen, and 2.12 for James Joyce. These results seem to agree with intuition. The Bible is a formulaic and at times repetitive text with a relatively small vocabulary. Jane Austen has a more literary style, mirroring literary and social conventions of the 19th century. James Joyce, finally, has a modern literary style, with high variance down to the level of vocabulary.\footnote{Again, the reader is cautioned against interpreting these entropy estimates as somehow capturing the quality of these texts.}

This suggests that different registers or styles have different entropies in natural language. It could be thought that this does not matter for generating text with LLMs, because everybody has access to the same models with the same capabilities, and thus the same entropy rates. One reason why register matters is that different registers are associated with different topics, such that different topics may yield different entropy rates, mediated by the entropy of the style of a topic. A second reason is personalization. Assume that someone wants to use an LLM to produce text that fits their personal style, e.g., by providing the LLM with samples of text written by them. A skilled LLM will mirror this style, which means that the output will have the entropy of that style. It is well-known that LLMs have the capability of adopting different styles.\footnote{Although these abilities presumably do not yet match human-level capabilities, cf. \citep{pu2023}.} Thus, differences in entropy between different styles, different sociolects etc., may impact entropy of LLM output, and thus watermark quality. To give an example, I am a non-native speaker of English. If I want to use an LLM to produce personalized text, and if there is an entropy difference between English produced by native and non-native speakers, this will lead to an entropy difference between the output produced for me, and the output for the same prompt in the style of a native speaker. Thus, either I use an LLM to produce output that does not sound like me, or there may be an entropy difference between text produced in my style and the style of a native speaker. And this may yield a difference in watermark quality.\footnote{The difference between the entropy of native and non-native English speakers appears to be not very well studied. There is some evidence for differences depending on the first language of English learners, measured as Kolmogorov complexity \citep[Sec. 6.2.3]{ehret2016}, but the results are not very reliable due to data availability, and should thus be taken with caution, as the author of the study points out.}

\paragraph{Natural Language Dependence}

Entropy differences may arise because different languages have different entropies. There is empirical evidence that different languages do in fact have different entropies. \citet{kople2023} provide evidence that languages with more speakers (so-called large languages) have higher entropy.\footnote{\citet{kople2023} examined what they call the prediction complexity (the average Shannon entropy of next-token prediction) of different natural languages by training language models on large text corpora. Note that the issue of natural language complexity appears to be somewhat contested in computational linguistics.} Thus, languages like Mandarin and English may have higher entropies than small languages or dialects. Note that other factors may influence the use of LLMs in different languages. In particular, the availability of high-quality LLMs in a given language depends on the existence of large text corpora, which is limited for small languages.

\paragraph{Translation}

If there are entropy differences between natural languages, one may wonder about the entropy of translation, and the use of translation by different groups. Machine translation is a different task than text generation, because it presupposes a given text to be translated. However, if, say, high quality machine translation from language 1 to language 2 is combined with a high quality generative model in language 1, the combination can be viewed as a high quality generative model in language 2, and we can ask about the entropy of translation in comparison to a generative model in language 2.

Here are some hypotheses to be investigated. First, comparing the entropy of text generation with translation, the entropy of translation may be lower, other things being equal, because in translation, the output is constrained to be the translation of text that already exists, while generating novel text is not equally constrained.\footnote{\citet{arora2023} call tasks like machine translation ``strongly conditioned generation tasks'', and tasks like dialogue generation ``open-ended generation tasks''. Note that we can expect entropy to be lower for a more literal translation, and higher for a freer version or a ``recreation'' of a work in a different language.} It is possible that the text to be translated already contains a watermark, but it is not clear whether watermarks are robust under translation. Thus, people relying on translations in any language may be confronted more with non-watermarked text. Second, if different natural languages have different base entropies, the entropies of translation into these languages may also differ. Also, there may be more translation into smaller languages in general, because more text and information may be available in larger languages. Thus, people using text in small languages may be confronted more with non-watermarked text, because small languages have lower entropy.

\subsection{Consequences of Entropy Differences}

Now consider possible consequences of entropy differences between groups. People from different groups may both benefit and be harmed by either the presence or absence of watermarks, depending on their goals and roles. Distinguish, first, between contributors and recipients of text. Contributors are people who write and generate text, while recipients are people who read, consume, use text. For contributors, the goal may be to generate text for fair use or to generate text with the intent to deceive about the machine origin of the text. If the intent is to deceive, the goal may be to plagiarize, that is, to pass off generated text as written by a human, or to evade detection of generated text for misinformation or spamming. Contributors can also be human writers who may want to be recognized for their work. Finally, recipients can be readers of texts written for a particular community, or engineers collecting text to train an LLM.
 
To explore the consequences of entropy differences, consider a simple scenario. Assume that there are two text pools, $T_{low}$ and $T_{high}$, sets of text in which two different groups are interested, but which serve the same or similar functional roles for the two groups -- think of, e.g., different education levels. There may be an overlap between $T_{low}$ and $T_{high}$, but it is also assumed that they are relevantly different, in that text generation for $T_{low}$ has, on average, lower entropy than generation of text for $T_{high}$. Assume further that both groups have the same level of false positives, because false positives are very undesirable and should be kept as low as possible for both kinds of text. In this scenario, watermarked text in $T_{low}$ will be undetected at a higher rate than generated text in $T_{high}$, that is, the false negative rate in $T_{low}$ is higher than the f.n.r. in $T_{high}$. How are people in the two groups affected by this difference?

\begin{itemize}

\item \textit{Generate text for fair use:} If you generate text for $T_{low}$, you contribute to informational pollution at a higher rate than if you contribute to $T_{high}$, because the text you generate is harder to filter out, even if you have no malicious intent. Thus, if you care about the integrity of $T_{low}$, you may be more hesitant to generate text for $T_{low}$ than for $T_{high}$, other things being equal.

\item \textit{Generate text for plagiarism:} If you generate text for $T_{low}$ with the intent to plagiarize, passing yourself or others off as authors, your chances of getting caught are lower than if you generate text for $T_{high}$. Whether or not this makes plagiarizing for $T_{low}$ worthwhile depends on the cost of getting caught, and on the gains of getting away with plagiarism in $T_{low}$ and $T_{high}$. The cost of getting caught may be prohibitive, such that plagiarizing is not worth it even given the better odds for $T_{low}$. The gain from plagiarizing for $T_{high}$ may also be higher, e.g., if the prestige gained from contributing to $T_{high}$ is higher.

\item \textit{Generate text for misinformation / spamming:} If you generate text for spamming or misinformation, the expected cost of getting caught is lower than for plagiarism, because you will usually not be identifiable as having generated text. However, you do care about evading spam filters or content moderation, so generating text for $T_{low}$ is more attractive, other things being equal, than generating text for $T_{high}$. However, there may be a difference in gains to be made from malicious contributions to $T_{low}$ and $T_{high}$, which may change incentives. 

\item \textit{Write text:} If you contribute text to $T_{low}$ as an author, you contribute to an environment that may already have stronger informational pollution than $T_{high}$, meaning that your contribution may be drowned out by generated text. Also, your contribution may run the danger of being taken to be LLM output, because people are aware of higher false negative rates, such that the trust in your contribution is lower than in $T_{high}$. So, there may be an incentive to contribute to $T_{high}$, because trust is higher and taking authorship credit is easier. Again, whether or not this is the case depends on comparative gains.

\item \textit{Read text:} If you are a reader of $T_{low}$, you read in an environment where information about authorship is less reliable -- you have to expect that text you read is machine generated but not flagged as such. This can lead to a loss of trust in texts in $T_{low}$. Also, because of the incentives of building models based on data from  $T_{low}$, the quality of machine generated text in $T_{low}$ may be lower (see the next point).

\item \textit{Use text for LLM training:} Using generated text to train LLMs may degrade their quality. If you are an engineer and you know that there is a higher proportion of generated text in $T_{low}$ that you cannot get rid of using watermark detection, you may be more hesitant to use text in $T_{low}$ for training. As a consequence, LLMs may be less skilled to produce text for $T_{low}$ than for $T_{high}$, because there is an incentive not to use it for training.

\end{itemize}

In sum, some people, in particular those with malicious intent, such as spamming and misinformation, may profit from a low entropy text generation environment -- they can get away with more. For other people, other things equal, being in a low entropy text generation environment may cause harm, because information about authorship is less reliable. Note, incidentally, that the above analysis can also be read as an argument in favor of watermarks, because in an environment without watermarks, the false negative rate is higher than in an environment in which at least some generated text can be detected with watermarks.

\section{Conclusion}

In this paper, a weak, contextual notion of authorship that can accommodate both humans and LLMs was proposed. It was argued that authorship is determined on the basis of evidence about who produced relevant aspects of a text independently of prior productions. Based on an examination of existing watermarking schemes, it was argued that leaving authorship certification in the hands of private companies should be avoided. Instead, authorship certification should be handed over to a third party. Then the paper examined the possible consequences of the entropy dependence of watermarks for different groups, and it was argued that, depending on roles, there may be differential benefits from watermarks.

The arguments in this paper are incomplete and in need of further investigation. First, the examination was based on watermarking schemes that are now available. While it is plausible that future watermarking schemes share some of their properties, it is possible that future schemes use different kinds of entropy for injection. It is also possible that future schemes will provide publicly verifiable detection mechanisms while protecting watermarking functionality. Second, the above considerations of entropy differences between groups are by and large conjectures that need empirical verification. Finally, it is not claimed that watermarks are problematic and should be avoided, generally speaking. Watermarks, like any other technology, may be beneficial, all things considered, while at the same time providing those in power with disproportionate benefits, and possibly having a disparate impact on different groups. How to mitigate these issues will have to be addressed in future work.

\appendix

\section{Definitions of Entropy}
\label{sec:entropy}

The definitions of Shannon entropy and information content are standard, cf. \citep{macka2003,cover2006}.

\begin{definition}
The \emph{Shannon information content} $h(x)$ of an outcome $x$ of a discrete random variable $X$ is defined by:
\begin{equation}
h(x) = log \frac{1}{p(x)}.
\end{equation}
It is a measure of how much information we gain when we observe the outcome $x$. \citet{macka2003} uses this name; \citet{chris2023} call it empirical entropy.
\end{definition}

\begin{definition}
The \emph{Shannon entropy} $H(X)$ of a discrete random variable $X$ is defined by:
\begin{equation}
H(X) = - \sum_{x \in \mathcal{X}} p(x)\,log\, p(x).
\end{equation}
It can also be written as $H(p)$. Shannon entropy is the average Shannon information content of the outcomes of $X$, as noted in \citep{macka2003}.
\end{definition}

\begin{proposition}
$H(X) \geq 0$, and $H(X) = 0$ if the probability mass is concentrated on one outcome, i.e., $p(x) =1, p(x') = 0$ for $x' \neq x$.
\end{proposition}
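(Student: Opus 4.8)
The plan is to treat the two assertions separately, each reducing to elementary properties of the logarithm together with the standard continuity convention $0\,\log 0 := 0$. Since the expression $p(x)\log p(x)$ is not literally defined when $p(x)=0$, I would state this convention explicitly at the outset; it is justified by $\lim_{t\to 0^+} t\log t = 0$, which one checks by rewriting $t\log t = (\log t)/(1/t)$ and applying l'Hôpital's rule (or by any standard argument).

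For the first claim, I would argue termwise. For each $x\in\mathcal{X}$ with $p(x)>0$ we have $p(x)\in(0,1]$, hence $\log p(x)\le 0$, and therefore $-p(x)\log p(x)\ge 0$; for each $x$ with $p(x)=0$ the summand is $0$ by convention. Thus $H(X) = -\sum_{x\in\mathcal{X}} p(x)\log p(x)$ is a finite sum of nonnegative terms, so $H(X)\ge 0$.

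For the second claim, I would substitute the point-mass distribution directly: if $p(x)=1$ for some fixed $x$ and $p(x')=0$ for all $x'\ne x$, then the defining sum has exactly one term equal to $-1\cdot\log 1 = 0$ and all remaining terms equal to $-0\cdot\log 0 = 0$ by convention, so $H(X)=0$.

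The only step requiring any care — and it is very mild, so calling it an obstacle overstates it — is the handling of the zero-probability outcomes, i.e.\ making the convention $0\,\log 0 = 0$ precise before it is used; once that is fixed, both parts are immediate, and no appeal to convexity, Jensen's inequality, or any deeper tool is needed. (The converse direction, that $H(X)=0$ forces the mass to be concentrated on a single outcome, is not asserted and so need not be proved here, though it follows from the termwise nonnegativity by noting that the sum vanishes only if every term does.)
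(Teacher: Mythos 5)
Your proof is correct and follows essentially the same route as the paper's: termwise nonnegativity of $-p(x)\log p(x)$ (equivalently $\log\frac{1}{p(x)}\geq 0$) for the first claim, and the convention $0\,\log 0=0$, justified by $\lim_{t\to 0^+} t\log t=0$, for the second. Your version merely spells out the substitution of the point-mass distribution in more detail than the paper does.
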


\begin{proof}
$0 \leq p(x) \leq 1$ implies $log \frac{1}{p(x)} \geq 0$. For the second part, note that $0\, log\, 0 = 0$, which follows from $lim_{x \rightarrow 0}\, x\, log\, x = 0$; cf. \citep{cover2006}.
\end{proof}

\begin{proposition}
$H(X) \leq log |\mathcal{X}|$, where $|\mathcal{X}|$ is the size of the range of $X$, with equality if and only if $X$ has a uniform distribution over $\mathcal{X}$.
\end{proposition}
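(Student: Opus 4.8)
The plan is to establish $H(X)\le\log|\mathcal X|$ as the standard maximum-entropy inequality, deriving it from Jensen's inequality for the concave function $\log$ (equivalently, from the non-negativity of the relative entropy of $p$ against the uniform law), and then reading the equality case off strict concavity.

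First I would fix notation for the support, $S=\{x\in\mathcal X: p(x)>0\}$, and keep the convention $0\log 0=0$ already used in the preceding proof, so that every sum may be restricted to $S$ without changing its value. Using $\sum_{x\in S}p(x)=1$ to write $\log|\mathcal X|=\sum_{x\in S}p(x)\log|\mathcal X|$, I would rewrite the gap between the two sides as
\[
H(X)-\log|\mathcal X| \;=\; \sum_{x\in S}p(x)\,\log\frac{1}{p(x)}\;-\;\sum_{x\in S}p(x)\,\log|\mathcal X| \;=\; \sum_{x\in S}p(x)\,\log\frac{1}{p(x)\,|\mathcal X|}.
\]
This exhibits $\log|\mathcal X|-H(X)$ as the Kullback--Leibler divergence of $p$ from the uniform distribution on $\mathcal X$, which is the structural fact driving the argument.

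Next I would apply Jensen's inequality: since $\log$ is concave and $\sum_{x\in S}p(x)=1$, the sum above is at most
\[
\log\!\left(\sum_{x\in S}p(x)\cdot\frac{1}{p(x)\,|\mathcal X|}\right) \;=\; \log\frac{|S|}{|\mathcal X|}\;\le\;\log 1 \;=\;0,
\]
the last inequality because $S\subseteq\mathcal X$; hence $H(X)\le\log|\mathcal X|$. For the equality clause I would argue in two stages. Equality in Jensen, because $\log$ is \emph{strictly} concave, forces $1/\bigl(p(x)|\mathcal X|\bigr)$ to take the same value for every $x\in S$, i.e.\ $p$ is constant on its support; and equality in $\log\frac{|S|}{|\mathcal X|}\le 0$ forces $|S|=|\mathcal X|$, i.e.\ the support is all of $\mathcal X$. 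Together these give $p(x)=1/|\mathcal X|$ for every $x$, and conversely substituting the uniform law into the definition of $H$ returns exactly $\log|\mathcal X|$, so the condition is also sufficient.

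I do not expect a serious obstacle: the only care points are restricting the sums to $S$ so that Jensen is applied to genuine finite values, and invoking the strict-concavity equality condition precisely enough to get an ``if and only if'' rather than merely the inequality. If one prefers to avoid the equality case of Jensen, the same conclusion follows from the elementary estimate $\ln t\le t-1$, with equality iff $t=1$, applied termwise with $t=1/\bigl(p(x)|\mathcal X|\bigr)$: summing against $p$ collapses the right-hand side to $|S|/|\mathcal X|-1\le 0$, and the termwise equality conditions immediately pin down the uniform distribution. I would present the Jensen argument as the main proof and record this alternative in a parenthetical remark.
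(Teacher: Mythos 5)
Your proof is correct, and it coincides in substance with the paper's: the paper gives no argument of its own here, it simply cites Theorem 2.6.4 of Cover and Thomas, whose proof is exactly what you wrote --- identify $\log|\mathcal{X}|-H(X)$ with the relative entropy of $p$ against the uniform distribution and apply Jensen's inequality, with the strict concavity of $\log$ supplying the equality case. Your handling of the support $S$ and the two-stage equality analysis ($p$ constant on $S$, and $|S|=|\mathcal{X}|$) is sound, so the write-up is a complete, self-contained version of the cited proof rather than a genuinely different route.
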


\begin{proof}
\citep[Theorem 2.6.4]{cover2006}.
\end{proof}

\begin{definition}
The \emph{average Shannon entropy} of a sequence $X_1, X_2, ..., X_n$ of random variables is:
\begin{equation}
\frac{1}{n} \sum_i H(X_i)
\end{equation}
\end{definition}

\begin{definition}
\citep[Def. 4.1.]{kirch2023} The \emph{spike entropy} of a distribution $p(x)$ with modulus $z$ is 
\begin{equation}
S(p, z) = \sum_{x \in \mathcal{X}} \frac{p(x)}{1 + zp(x)}.
\end{equation}
Kirchenbauer et al. note that spike entropy is minimal if the probability mass is on one outcome, and maximal for uniform distributions, like Shannon entropy. The \emph{average spike entropy} of a sequence of distributions $p_1, ..., p_n$ is:
\begin{equation}
\frac{1}{n}\sum_i S(p_i, z). 
\end{equation}
\end{definition}

\begin{definition}
\citep{chris2023} The detection score for one bit $x_t$ and one random draw $u_t$ from $[0,1]$ is:
\begin{equation}
s(x_t, u_t) = 
	\begin{cases}
	\text{ln}\, \frac{1}{u_t} & \text{if } x_t = 1, \\
	\text{ln}\, \frac{1}{1 - u_t} & \text{if } x_t = 0.
	\end{cases}
\end{equation}
The detection score for a text (string) $x = (x_1, ..., x_n)$ and a sequence of random draws $(u_1, ..., u_n)$ is:
\begin{equation}
c(x) = \sum_{i=1}^n s(x_i,u_i)
\end{equation}
\end{definition}

\end{document}